\newtheorem{theorem}{Theorem}[section]
\newtheorem{proposition}{Proposition}[section]
\newtheorem{remark}{Remark}[section]
\numberwithin{equation}{section}
\begin{document}
\title[Equatorial stratified wind-drift currents with centripetal effects]
{Equatorial wind-drift currents with a continuous stratification and centripetal effects in the $\beta$-plane setting}%
\author[Lili Fan$^{\dag}$]{Lili Fan$^{\dag}$}%
\address[Lili Fan]{College of Mathematics and Information Science,
Henan Normal University, Xinxiang 453007, China}
\email{fanlily89@126.com\, (Corresponding author)}
\author[Ruonan Liu]{Ruonan Liu}%
\address[Ruonan Liu]{College of Mathematics and Information Science,
Henan Normal University, Xinxiang 453007, China}
\email{liuruonan97v@163.com}


\begin{abstract}
In this paper, we aim to derive an exact solution to the three-dimensional governing equations for wind-induced equatorial flows in the $\beta$-plane approximation with a depth-dependent density distribution and subjected to centripetal terms. The obtained explicit solution represents a steady purely azimuthal stratified flow with a flat surface and an impermeable flat bed that is suitable for describing the Equatorial Current. Resorting to a functional analysis, we show that the thermocline exhibits some monotonicity properties.
\end{abstract}

\date{}

\maketitle

\noindent {\sl Keywords\/}: Equatorial Current, exact solution, stratification, eddy viscosity, centripetal effects.


\noindent {\sl AMS Subject Classification} (2010): 35Q31; 35J60; 76B15. \\

\section{Introduction}
Consideration in this paper is the study of the nature of equatorial currents generated by the surface wind stress \cite{CJ3,WML}, noticeable for the near-surface current, and influenced by a continuous stratification and the centripetal force. Quite different from the Ekman theory \cite{Va}, investigated further recently in \cite{BC,Constantin2,DPC} etc., the equatorial currents exhibit no deflection of the surface current with respect to the wind due to the peculiarities of the Coriolis force at the Equator \cite{B,CI2,CJ4}. But based on the realistic consideration, such as its significant influence on the climate \cite{CJ,CJ3,Gill,TG}, the investigation of the nature of equatorial wind-stress currents is quite intricate and of great current interest. Besides, due to the importance of the depth of the thermocline in dynamic (c.f. El Nino phenomenon \cite{TPES}), we consider in the paper the azimuthal flows with a continuous stratification across the thermocline.

To describe the nonlinear dynamic of the given complex fluid flows in detail, it is remarkable to find an exact solution to the water wave problem. The Gerstner-type wave solution is well known and can be modified to describe a number of different physical and geophysical scenarios cf. \cite{CIKY2,Co1,Co2,Co3,Co4,CJ,CJ3,H1,I,Ma}. A most recent result for geophysical water waves with wind-stress can be referred to \cite{Co}. In regard to the research on the specifically azimuthal flows, the exact solutions have recently been studied by Constantin and Johnson in \cite{CJ2} for modelling of the homogeneous equatorial flows and the Antarctic Circumpolar Current (ACC) and subsequent studies on stratified flows can be referred to \cite{Ba,HM,HM1,HM2,HM3,HsM,M} ect. It is notable that the above studies are carried out in rotating spherical coordinates or in terms of cylindrical coordinates, which make the extraction of the fine details from the exact solutions difficult. To make a more apparent insight into the properties of the equatorial currents and the ACC, an alternative approach was pursued in \cite{CJ,MQ,Marynets,Q,YW2} to study homogeneous and stratified flows, where they simplified the geometry by relying on the equatorial $f$-plane approximation.

In the ocean dynamics, $f$-plane and $\beta$-plane approximation are two commonly used models \cite{CRB}. In the $f$-plane approximation, the Coriolis parameter is considered as constant, where the latitudinal variations are ignored and for the $\beta$-plane approximation, it introduces a linear variation with latitude of the Coriolis parameter. In this paper, we consider water waves in a moderate meridional distance from the equator, where the Navier-Stokes equations in the $\beta$-plane are applicative. Compared with the models investigated in \cite{Martin,Marynets,YW2}, the consideration of the $\beta$-plane approximation introduces a three-dimensional model with the appearance of the meridional coordinate. Besides, a higher precision with respect to an arbitrary stratification and the centripetal terms makes the three-dimensional model more complicated.

In this paper, we mainly obtain the exact solution of the linearised Navier-Stokes equations for a steady-state stratified flow with centripetal effects and under the assumption of a uniform wind stress. In our continuous stratification setting, the exact solution of the azimuthal velocity for the two-layer stratified fluid considered in \cite{Martin,YW2} can be recovered by taking a limiting process in our main integral formula for the azimuthal velocity field. The expression of the pressure obtained exhibits the three-dimensionality, which reduces to the two-dimensional results for the two-layer stratified fluid considered in \cite{Martin,YW2} and the continuous stratified fluid considered \cite{Marynets} without the consideration of the meridional coordinate. Besides, we confirm three monotonicity results by virtue of a functional analysis. Namely, the level of thermocline and strength of current velocity at thermocline decreases with the increase of the strength of wind speed at $10$ meters above the sea, and the strength of the flow reversal increases with the increase of the strength of wind speed at $10$ meters above the sea.

The remainder of this paper is organized as follows.  In Section 2, we present the governing equations for the equatorial wind-induced flow influenced by the centripetal force in the $\beta$-plane approximation. In Section 3, we derive the exact solution and give an analysis on the exact solution to obtain some monotonicity results.
\section{The governing equations in the $\beta$-plane setting with the centripetal terms}
Consideration in this paper is the effect of a uniform wind stress on the equatorial stratified water flows in a region of width of about 100 km, symmetric about the equator. In a reference frame with the origin located at a point fixed on Earth's surface and rotating with the Earth, we consider the zonal coordinate $x$ pointing east, the meridional coordinate $y$ pointing north and the vertical coordinate $z$ pointing up. Our aim is to derive purely azimuthal flow solutions, which means a steady flow moving in the azimuthal direction with vanishing meridional and vertical fluid velocity components. To this end, we consider the linearised Navier-Stokes in the $\beta$-plane approximation with the centripetal terms \cite{Co2,Henry3},
\begin{equation}\label{2.1}
\begin{cases}
&0=-\frac{1}{\rho}P_{x}+(\nu u_{z})_{z},\\
&\beta yu+\Omega^2 y=-\frac{1}{\rho}P_{y},\\
&-2\Omega u-\Omega^2 R=-\frac{1}{\rho}P_{z}-g,
\end{cases}
\end{equation}
and the mass conservation equation
\begin{equation}\label{2.2}
u_{x}=0,
\end{equation}
where $u$ is the horizontal fluid velocity component, $P=P(x,y,z,t)$ is the pressure field, $g=9.81$ m/s is the gravitational acceleration at the Earth's surface, $\rho=\rho(z)$ is a depth-depended density, $\nu=\nu(z)>0$ is the vertical eddy viscosity parameter \cite{CK}, $\Omega\approx7.29\times 10^{-5}$ rad s $^{-1}$ is the constant rotational speed of the Earth about the polar axis, $R=6371$ km is the radius of the Earth and $ \beta=2 \Omega / R=2.28 \times 10^{-11} \mathrm{~m}^{-1} \mathrm{~s}^{-1} $ (cf. \cite{CRB}).

Associated with the system \eqref{2.1}-\eqref{2.2} are the boundary conditions
\begin{equation}\label{2.3}
u=0 \quad \text{on} \quad z=-d,
\end{equation}
with $z=-d$ denoting the flat impermeable bottom, and
\begin{equation}\label{2.4}
u_{z}=0\quad \text{on} \quad z=-h,
\end{equation}
which denotes the shear vanishing at the thermocline $z=-h$ (cf. \cite{CK}). On the flat surface $z=0$, a discussion in \cite{CJ} and an assumption that the shear does not depend on the wind speed show that
\begin{equation}\label{2.5}
u_{z}=-\alpha \quad \text{on} \quad z=0,
\end{equation}
where $\alpha$ is a constant and is positive as the trade winds blow from east to west in the equatorial Pacific.
\section{Main results}

\subsection{Explicit solution for azimuthal velocity and pressure}
We adapt methods in \cite{Marynets} to bear relevance to our derivation for the system \eqref{2.1}-\eqref{2.5}.
\begin{theorem}\label{the3.1}
The solution of the purely azimuthal flow system \eqref{2.1}-\eqref{2.5} is given by
\begin{equation}\label{3.1}
u(z)=-\alpha \nu(0) \int^z_{-d}{\left(\frac {1} {\nu(s)}\frac{\int^s_{-h}{\frac{1}{\rho(r)}}dr}{\int^0_{-h}{\frac{1}{\rho(r)}}dr}    \right)}ds,
\end{equation}
and
\begin{align}\label{3.2}
P(x,y,z)=&-\frac{\alpha \nu(0)}{\int^0_{-h}{\frac{1}{\rho(s)}}ds}x
-2\Omega \int^0_{z}{\rho(s)u(s)}ds
-(\Omega^2 R-g)\int^0_{z}{\rho(s)}ds\nonumber\\
&-\rho(z)\left[\frac{\beta y^2}{2}u(z)+\frac{\Omega^2}{2}y^2\right]+P_{atm},
\end{align}
where $(x,y,z)\in \mathbb{R}^2\times [-d,0]$ and $P_{atm}$ is the constant atmospheric pressure.
\end{theorem}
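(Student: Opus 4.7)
The plan is to build the solution by direct integration, exploiting the purely azimuthal ansatz $v=w=0$ together with the mass-conservation constraint. First I would note that \eqref{2.2} forces $u$ to be independent of $x$, and, under the purely azimuthal flow hypothesis, I would seek $u=u(z)$. Substituting into \eqref{2.1} the first equation becomes $\rho(z)(\nu u_z)_z = P_x$, whose left-hand side depends only on $z$. Cross-compatibility with the third equation of \eqref{2.1}, which gives $P_z = \rho(z)(2\Omega u+\Omega^2 R - g)$ depending only on $z$, forces $P_{xz}=P_{zx}=0$, so the quantity $\rho(z)(\nu u_z)_z$ is a constant $C$. That reduces the problem for $u$ to a linear ODE with three boundary conditions.

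The second step is to integrate this ODE twice. Starting from $(\nu u_z)_z = C/\rho(z)$, integrating from $-h$ to $z$ and using \eqref{2.4} gives $\nu(z)u_z(z) = C\int_{-h}^{z}\rho(s)^{-1}\,ds$; evaluating at $z=0$ and applying \eqref{2.5} pins down
\[
C = -\frac{\alpha\nu(0)}{\int_{-h}^{0}\rho(s)^{-1}\,ds}.
\]
A further integration from $-d$ to $z$ using \eqref{2.3} then yields exactly \eqref{3.1}. For the pressure, I would integrate the third equation of \eqref{2.1} in $z$ from $z$ to $0$ to produce the $-2\Omega\!\int_{z}^{0}\rho u\,ds$ and $-(\Omega^{2}R-g)\!\int_{z}^{0}\rho\,ds$ contributions, integrate the second equation of \eqref{2.1} in $y$ to get the $-\rho(z)\bigl[\tfrac{\beta y^{2}}{2}u(z)+\tfrac{\Omega^{2}y^{2}}{2}\bigr]$ term, and finally incorporate the linear-in-$x$ contribution $Cx$ forced by the first equation of \eqref{2.1}. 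The additive constant is fixed by matching the atmospheric pressure $P_{atm}$ at $(x,y,z)=(0,0,0)$.

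The main obstacle is the simultaneous compatibility of the three pressure equations. The delicate check is that $P_x$, obtained from \eqref{2.1} as $\rho(z)(\nu u_z)_z$, must be independent of both $y$ and $z$; the derivation above shows this automatically holds once $u(z)$ is given by \eqref{3.1}, since $\rho(z)(\nu u_z)_z\equiv C$. The remaining compatibilities $P_{xy}=P_{yx}$ and the leading-order balance of $P_{yz}=P_{zy}$ are then routine, with the $\beta$-plane structure ensuring that the $y$-dependence enters only quadratically through the Coriolis and centripetal terms, giving the expression \eqref{3.2}. Once $u$ and $P$ are in hand, a direct verification that the ansatz $v=w=0$, together with \eqref{3.1}-\eqref{3.2}, satisfies every equation and every boundary condition \eqref{2.3}-\eqref{2.5} closes the proof.
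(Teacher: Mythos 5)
Your proposal is correct and follows essentially the same route as the paper: establish that $P_x=\rho(z)(\nu u_z)_z$ is a constant (you via the mixed partial $P_{xz}=P_{zx}=0$ from the third equation, the paper by differentiating the system in $x$ and using $u_x=0$ to get $\nabla P_x=0$), integrate the resulting ODE from $-h$ using \eqref{2.4}, fix the constant with \eqref{2.5}, integrate again from $-d$ with \eqref{2.3} to obtain \eqref{3.1}, and then assemble $P$ by integrating the three components of $\nabla P$ and normalizing with $P_{atm}$ at the origin. The only cosmetic difference is that the paper carries an integration ``constant'' $b(y)$ and shows it vanishes, whereas you posit $u=u(z)$ from the outset; both arguments yield the same formulas.
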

\begin{proof}
Differentiating equation \eqref{2.1} with respect to $x$ and utilizing \eqref{2.2}, we obtain that
\begin{equation*}
(P_{xx},P_{xy},P_{xz})=\nabla P_{x}=0,
\end{equation*}
which shows that there is some constant $a$ such that
\begin{equation}\label{3.3}
P_{x}=a \quad \text{within the fluid domain}.
\end{equation}
Plugging \eqref{3.3} into the first equation of \eqref{2.1}, we get that
\[
(\nu u_{z})_{z}=\frac{a}{\rho(z)},
\]
which implies that there exists some function $b(y)$ such that
\begin{equation}\label{3.4}
\nu u_{z}=a\int^z_{-h}{\frac{1}{\rho(s)}}ds+b(y)\quad \text{for}\;-d\leq z\leq 0.
\end{equation}
On the other hand, an employment of the boundary conditions \eqref{2.4} and \eqref{2.5} yield that
\begin{equation*}
-\alpha \nu(0)=a\int^0_{-h}{\frac{1}{\rho(s)}}ds+b(y),\quad \text{and}\quad
\nu(-h)\cdot 0=0+b(y),
\end{equation*}
i.e.,
\begin{equation*}
a=-\frac{\alpha \nu(0)}{\int^0_{-h}{\frac{1}{\rho(s)}}ds},\quad b(y)=0.
\end{equation*}
Then we get from \eqref{3.4} that
\begin{equation}\label{3.5}
u_{z}=-\frac{\alpha \nu(0)}{\nu(z)} \frac{\int^z_{-h}{\frac{1}{\rho(s)}}ds}{\int^0_{-h}{\frac{1}{\rho(s)}}ds}.
\end{equation}
An integration of \eqref{3.5} with respect to $z$ leads to
\begin{equation}\label{3.6}
u(z)=-\alpha \nu(0) \int^z_{-d}\left({\frac{1}{\nu(s)} \frac{\int^s_{-h}{\frac{1}{\rho(r)}}dr}{\int^0_{-h}{\frac{1}{\rho(r)}}dr}}\right)ds,
\end{equation}
where the boundary condition \eqref{2.3} is used.

To get the expression of $P$, we integrate the first equation of \eqref{2.1} with respect to $x$ and the third equation of \eqref{2.1} with respect to $z$ to reach that
\begin{equation}\label{3.7}
P(x,y,z)=-\frac{\alpha \nu(0)}{\int^0_{-h}{\frac{1}{\rho(s)}}ds}x-2\Omega \int^0_{z}{\rho(s) u(s)}ds
-(\Omega^2R-g)\int^0_{z}{\rho(s)}ds+\tilde{p}(y),
\end{equation}
for some function $\tilde{p}(y)$. Substituting \eqref{3.7} into the second equation of \eqref{2.1}, we deduce that
\begin{equation}\label{3.8}
\frac{d\tilde{p}(y)}{dy}=-\rho(z)[\beta yu(z)+\Omega^2y].
\end{equation}
Letting $P_{atm}$ be the constant atmospheric pressure at the point $(x,y,z)=(0,0,0)$, we get that
\begin{equation}\label{3.9}
\tilde{p}(y)=-\rho(z)\left[\frac{\beta y^2}{2}u(z)+\frac{\Omega^2}{2}y^2\right]+P_{atm},
\end{equation}
which gives the expression of $P$ as in \eqref{3.2}.
\end{proof}

The expression of $u_z$ in \eqref{3.5} illustrates that the current speed increases strictly from flat surface to the level of thermocline and decreases strictly from the level of thermocline to the flat bottom, as shown in Fig. 1.
\begin{figure}[htbp]
\centering\includegraphics[width=3.4in]{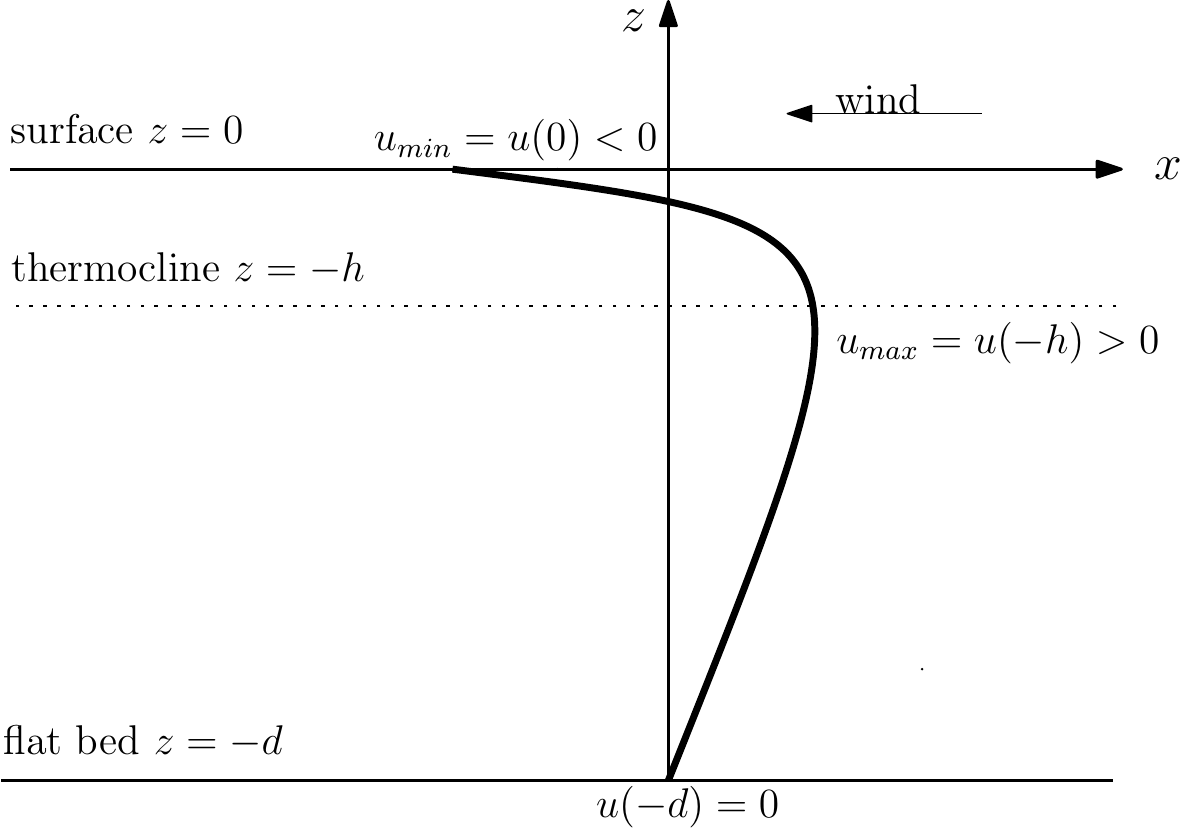}
\caption{Typical vertical profile of the current field in the equatorial Pacific region. The horizontal axis points from West to East, along the Equator.}\label{fig1}
\end{figure}
\begin{remark}The results in Theorem \ref{the3.1} are the three-dimensional generalizations of the ones in \cite{Marynets,YW2}. In fact, taking
\begin{equation*}
\rho(z)=
\begin{cases}
&\rho_0,\quad -h < z \leq 0, \\
&\rho_1,\quad -d \leq z < -h,
\end{cases}
\end{equation*}
where $\rho_0$ and $\rho_1$ are constant densities with $\rho_0 < \rho_1$, the equation \eqref{3.1} is reduced to
\begin{equation}\label{u}
u(z)=
\begin{cases}
&-\frac{\alpha \nu(0)\rho_0}{\rho_1} \int^z_{-d}{\frac{1+\frac{s}{h}}{\nu(s)}}ds\quad
\text{for}\; -d\leq z\leq -h,\\
&-\alpha \nu(0) \int^z_{-h}{\frac{1+\frac{s}{h}}{\nu(s)}}ds
-\frac{\alpha \nu(0)\rho_0}{\rho_1} \int^{-h}_{-d}{\frac{1+\frac{s}{h}}{\nu(s)}}ds \quad \text{for}\;-h\leq z\leq 0,
\end{cases}
\end{equation}
and the equation \eqref{3.2} is reduced to
\begin{equation}\label{P}
\begin{cases}P(x,y,z)&=-\frac{\alpha \nu(0)}{h}\rho_0 x
-2\Omega \rho_0 \int^0_{-h}{u(s)}ds
-2\Omega \rho_1 \int^{-h}_z{u(s)}ds
+\rho_0(g-\Omega^2 R)h\\
&-\rho_1(g-\Omega^2 R)(z+h)
-\rho_1 \left(\frac{\beta y^2}{2}u(z)+\frac{\Omega^2 y^2}{2}\right)+P_{atm}\qquad \text{for}\; -d\leq z\leq -h,\\
P(x,y,z)&=-\frac{\alpha \nu(0)}{h}\rho_0 x
-2\Omega \rho_0 \int^0_{z}{u(s)}ds
-\rho_0(g-\Omega^2 R)z\\
&-\rho_0 \left(\frac{\beta y^2}{2}u(z)+\frac{\Omega^2 y^2}{2}\right)+P_{atm} \qquad \text{for}\;-h\leq z\leq 0.
\end{cases}
\end{equation}
The expressions \eqref{u} and \eqref{P} coincide with the two-dimensional case considered in \cite{YW2} for $y=0$ and $\alpha=\frac 1 {\rho_0 \sigma}$.

Taking no account of the effect of the centripetal force, the expression of $P$ is given by
\begin{align*}
P(x,y,z)=-\frac{\alpha \nu(0)}{\int^0_{-h}{\frac{1}{\rho(s)}}ds}x
-2\Omega \int^0_{z}{\rho(s) u(s)}ds
+g\int^0_{z}{\rho(s)}ds
-\frac{\beta y^2}{2}\rho(z) u(z)+P_{atm}.
\end{align*}
This coincides with the two-dimensional case considered in \cite{Marynets} for $y=0$.
\end{remark}
\vspace{-4mm}
\subsection{Monotonicity}
In this subsection, the method in \cite{Martin} is adapted our analysis on the monotonicity of the level of the subsurface and the strength of the current at the subsurface in connection with the strength of the wind speed near the ocean's surface.

Employing the arguments in \cite{CJ,CK,PGT,SHM}, we make the assumption that the viscosity has the form
\begin{equation}\label{3.10}
\nu(z)=\nu(0) f\left(\frac{z}{d}\right), \quad -d \leq z \leq 0,
\end{equation}
where a suitably chosen $f:\ [-1,0]\rightarrow (0,\infty)$ is a decreasing function with depth in the layer above the thermocline. On the other hand, the discussions in \cite{BVW,GOOD,Martin} lead us to grasp the relation between the wind speed $U_{wind}$ at $z$ meters above the sea and the velocity of the free surface $u(0): =u_{0}$ as follows
\begin{equation}\label{3.11}
U_{wind}=\frac{1}{\kappa} u_{0}\ln\left(\frac{zg}{au_{0}^2} +1\right),
\end{equation}
where $\kappa$ is known as the $K\acute{a}rman$ constant and $a$ is a positive constant. It is usually typical to consider the wind speed at $10$ meters above the sea, denoted as $U_{10}$. On account of the wind blowing from the east to the west, we have $U_{wind} < 0$, resulting in $u_{0} < 0$.

Now, we are in the position to introduce the monotonicity between the strength of wind speed at $10$ meters above the sea, $|U_{10}|$, and the level of the thermocline $-h$.
\begin{proposition}\label{pro3.1}
We assume the eddy viscosity function is given by \eqref{3.10}. Then the level of the thermocline $-h$ decreases as the strength of wind speed $|U_{10}|$ increases.
\end{proposition}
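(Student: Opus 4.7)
The plan is to regard the thermocline depth $h$ as the unknown and to trace its response to $|U_{10}|$ through the two coupled relations furnished by Theorem \ref{the3.1} and equation \eqref{3.11}. With the viscosity ansatz \eqref{3.10} and all remaining parameters ($\alpha$, $d$, $\rho$, $f$, $\kappa$, $a$) held fixed, I would view the surface velocity $u_{0}:=u(0)$ as a function of $h$ alone via \eqref{3.1} evaluated at $z=0$, and then compose with the wind-profile law \eqref{3.11}.

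The first step, and the only one that asks for a real computation, is to show that $h\mapsto|u_{0}(h)|$ is strictly increasing. Writing
$$R(h,s):=\frac{\int_{-h}^{s}\rho(r)^{-1}\,dr}{\int_{-h}^{0}\rho(r)^{-1}\,dr},$$
the Leibniz rule applied at the lower endpoint $-h$ together with the quotient rule gives
$$\frac{\partial R}{\partial h}(h,s)=\frac{1}{\rho(-h)\bigl(\int_{-h}^{0}\rho(r)^{-1}\,dr\bigr)^{2}}\int_{s}^{0}\frac{dr}{\rho(r)}>0\qquad\text{for all }s\in(-d,0).$$
Since $\alpha$, $\nu(0)$ and $1/\nu(s)$ are positive, differentiation of \eqref{3.1} at $z=0$ under the integral sign then yields $du_{0}/dh<0$; because $u_{0}<0$ throughout the physically relevant regime of a westward surface drift, this is equivalent to $|u_{0}|$ being strictly increasing in $h$.

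The second step is to invert the logarithmic wind profile \eqref{3.11}. Setting $v:=|u_{0}|$, it reads
$$|U_{10}|=\Phi(v):=\frac{v}{\kappa}\,\ln\!\left(1+\frac{10\,g}{a\,v^{2}}\right),$$
and a short calculation gives $\kappa\,\Phi'(v)=\ln(1+t)-2t/(1+t)$ with $t:=10g/(av^{2})$; this bracket is positive once $t$ exceeds its unique positive root, and in the geophysical regime in which surface current speeds are far smaller than the wind scale $\sqrt{10g/a}$ one has $t\gg 1$, so $\Phi$ is strictly increasing there. Composing the two monotone maps gives $|U_{10}|$ strictly increasing in $h$, and inverting yields $h$ strictly increasing in $|U_{10}|$; equivalently $-h$ strictly decreases as $|U_{10}|$ grows. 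The one delicate point is precisely this final monotonicity of $\Phi$, which is not global on $(0,\infty)$, so the argument rests on restricting to the physical range of ocean-current speeds; the remainder of the proof is a clean differentiation under the integral sign.
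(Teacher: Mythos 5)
Your proposal is correct and follows essentially the same route as the paper: differentiate the ratio $\int_{-h}^{s}\rho^{-1}/\int_{-h}^{0}\rho^{-1}$ in $h$ to get $d|u_{0}|/dh>0$, then show the logarithmic wind law is increasing in $|u_{0}|$ on the physically relevant range and compose via the chain rule. The only cosmetic difference is that you locate the sign change of $\ln(1+t)-2t/(1+t)$ at its unique positive root, whereas the paper uses the slightly cruder explicit sufficient condition $1+\frac{10g}{au_{0}^{2}}>e^{2}$; both impose the same kind of restriction on $|u_{0}|$.
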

\begin{proof}Combining \eqref{3.1} with \eqref{3.10}, we infer that
\begin{equation*}
u_{0}=-\alpha \int^0_{-d}{\left(\frac{1}{f\left(\frac{s}{d}\right)} \frac{\int^s_{-h}{\frac{1}{\rho(r)}}dr}{\int^0_{-h}{\frac{1}{\rho(r)}}dr} \right)}ds < 0,
\end{equation*}
where the sign of $u_{0}$ is due to the reality that the trade wind blows westwords in the equatorial Pacific. By a calculation
\begin{align}\label{3.12}
\frac{d|u_{0}|}{dh}&=\alpha \int^0_{-d}{\frac{1}{f\left(\frac{s}{d}\right)}
\frac{\partial}{\partial h}\left(\frac{\int^s_{-h}{\frac{1}{\rho(r)}}dr}
{\int^0_{-h}{\frac{1}{\rho(r)}}dr}\right)}ds\nonumber\\
&=\alpha \int^0_{-d}{\frac{1}{f\left(\frac{s}{d}\right)}
\frac{\frac{1}{\rho(-h)}\left(\int^0_{-h}{\frac{1}{\rho(r)}}dr-\int^s_{-h}{\frac{1}{\rho(r)}}dr\right)}
{\left(\int^0_{-h}{\frac{1}{\rho(r)}}dr\right)^2}}ds\nonumber\\
&=\alpha \int^0_{-d}{\frac{1}{f\left(\frac{s}{d}\right)}
\frac{\frac{1}{\rho(-h)}\int^0_{s}{\frac{1}{\rho(r)}}dr}
{\left(\int^0_{-h}{\frac{1}{\rho(r)}}dr\right)^2}}ds > 0,
\end{align}
we conclude that $|u_{0}|$ exists invertible function about $h$, thus
\begin{equation}\label{3.13}
\frac{dh}{d|u_{0}|}=\left(\frac{d|u_{0}|}{dh}\right)^{-1} > 0.
\end{equation}
By \eqref{3.11},
\begin{equation}\label{3.14}
|U_{10}|=\frac{1}{\kappa} |u_{0}|\ln\left(\frac{10g}{au_{0}^2} +1\right).
\end{equation}
Then we get that
\begin{equation}\label{3.15}
\frac{d|U_{10}|}{d|u_{0}|}=\frac{1}{\kappa}\left[\ln\left(\frac{10g}{au_{0}^2}+1\right)
-\frac{20g}{10g+au_{0}^2}\right].
\end{equation}

Our next step is to determine the sign of \eqref{3.15}. Noting that $0< \frac{20g}{10g+au_{0}^2} < 2$, i.e., $e^{\frac{20g}{10g+au_{0}^2}}<e^2$ for all $|u_{0}|>0$, we obtain $\frac{10g}{au_{0}^2}+1>e^2$ for $|u_{0}|< \left[\frac{10g}{a(e^2-1)}\right]^{\frac{1}{2}}$, which implies the right side of \eqref{3.15} is strictly positive for $|u_{0}|\in(0,\left[\frac{10g}{a(e^2-1)}\right]^{\frac{1}{2}})$. Then, we infer that $|U_{10}|$ exists invertible function about $|u_{0}|$ for all $|u_{0}|\in \left(0, \left[\frac{10g}{a(e^2-1)}\right]^{\frac{1}{2}}\right)$, i.e.,
\begin{equation}\label{3.16}
\frac{d|u_{0}|}{d|U_{10}|}=\left(\frac{d|U_{10}|}{d|u_{0}}\right)^{-1} > 0.
\end{equation}
Due to \eqref{3.13} and \eqref{3.16}, it is obvious that
\begin{equation}\label{3.17}
\frac{dh}{d|U_{10}|}=\frac{dh}{d|u_{0}|} \frac{d|u_{0}|}{d|U_{10}|} >0.
\end{equation}
Therefore, the conclusion is confirmed.
\end{proof}

The monotonicity between the current at the thermocline $u(-h)$ and the strength of the wind speed $|U_{10}|$ is presented as follows.
\begin{proposition}\label{pro3.2}
The strength of the current at the thermocline $u(-h)$ decays as the strength of the wind speed $|U_{10}|$ increases. Besides, the difference $u(-h)-u(0)$, measuring the strength of the flow reversal, increases as $|U_{10}|$ increases.
\end{proposition}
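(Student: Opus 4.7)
The plan is to handle each assertion of Proposition \ref{pro3.2} along the same lines used in Proposition \ref{pro3.1}: with $\alpha$ held fixed, express the quantity of interest as an $h$-dependent integral built from Theorem \ref{the3.1} and the viscosity ansatz \eqref{3.10}, differentiate in $h$ via Leibniz and the quotient rule, and then chain with $\frac{dh}{d|U_{10}|}>0$ obtained in Proposition \ref{pro3.1} to convert $h$-monotonicity into $|U_{10}|$-monotonicity.

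For the current at the thermocline, I would substitute $z=-h$ into \eqref{3.1} and use \eqref{3.10} to get
\[
u(-h)=-\alpha\int_{-d}^{-h}\frac{1}{f(s/d)}\,I(s,h)\,ds,\qquad I(s,h):=\frac{\int_{-h}^{s}\frac{1}{\rho(r)}\,dr}{\int_{-h}^{0}\frac{1}{\rho(r)}\,dr},
\]
which is strictly positive since $I(s,h)\le 0$ on $[-d,-h]$. Differentiating in $h$, the boundary contribution at the upper limit $s=-h$ drops out because $I(-h,h)=0$, and the very same quotient-rule computation that produced \eqref{3.12} shows that $\partial_h I(s,h)>0$ for $s\in[-d,-h]$ (its sign is governed by $\int_{s}^{0}\frac{1}{\rho(r)}\,dr>0$). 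Hence $\frac{du(-h)}{dh}<0$, and combining with $\frac{dh}{d|U_{10}|}>0$ from Proposition \ref{pro3.1} yields $\frac{du(-h)}{d|U_{10}|}<0$, which is the first claim.

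For the flow reversal, I would integrate the shear \eqref{3.5} (under \eqref{3.10}) from $-h$ to $0$, giving
\[
u(-h)-u(0)=\alpha\int_{-h}^{0}\frac{1}{f(z/d)}\,I(z,h)\,dz,
\]
strictly positive because $I(z,h)>0$ on $(-h,0)$. Differentiating in $h$, the boundary term at $z=-h$ again vanishes via $I(-h,h)=0$, and on $[-h,0]$ one still has $\partial_h I(z,h)>0$ (now because $\int_{z}^{0}\frac{1}{\rho(r)}\,dr>0$ there). Thus $\frac{d}{dh}\bigl(u(-h)-u(0)\bigr)>0$, and the chain rule with $\frac{dh}{d|U_{10}|}>0$ concludes the second assertion.

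The argument is essentially a Leibniz-rule calculation, so the main obstacle is bookkeeping rather than conceptual: one must verify that the boundary contributions at the $h$-dependent endpoint $s=-h$ (resp.\ $z=-h$) cancel because $I$ vanishes there---without that cancellation an extra, potentially sign-breaking term would survive---and track signs carefully when passing between $u_0$, $|u_0|$, $|U_{10}|$ and the two quantities $u(-h)$, $u(-h)-u(0)$ under study.
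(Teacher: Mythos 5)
Your proposal is correct and follows essentially the same route as the paper: differentiate $u(-h)$ and $u(-h)-u(0)$ with respect to $h$ (the boundary terms at $s=-h$ vanishing since the integrand is zero there), obtain exactly the signed expressions \eqref{3.18} and \eqref{3.20}, and chain with $\frac{dh}{d|U_{10}|}>0$ from Proposition \ref{pro3.1}. Your treatment is in fact slightly more careful than the paper's in making the Leibniz boundary contributions explicit.
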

\begin{proof}
By \eqref{3.1}, we have
\begin{equation}\label{3.18}
\frac{d(u(-h))}{dh}=-\frac{\alpha}{\rho(-h)\left(\int^0_{-h}{\frac{1}{\rho(r)}}dr\right)^2}
\int^{-h}_{-d}{\frac{\int^0_{s}{\frac{1}{\rho(r)}}dr}{f\left(\frac{s}{d}\right)}}ds < 0,
\end{equation}
and due to the relation
\begin{equation*}
u(-h)-u(0)=\alpha \int^0_{-h}{\frac{1}{f\left(\frac{s}{d}\right)} \frac{\int^s_{-h}{\frac{1}{\rho(r)}}dr}{\int^0_{-h}{\frac{1}{\rho(r)}}dr}}ds,
\end{equation*}
we have
\begin{equation}\label{3.20}
\frac{d[u(-h)-u(0)]}{dh}=\frac{\alpha}{\rho(-h)\left(\int^0_{-h}{\frac{1}{\rho(r)}dr}\right)^2}
\int^0_{-h}{\frac{\int^0_{s}{\frac{1}{\rho(r)}}dr}{f\left(\frac{s}{d}\right)}}ds > 0.
\end{equation}
Then we obtain from \eqref{3.17} that
\begin{equation}\label{3.19}
\frac{du(-h)}{d|U_{10}|}=\frac{du(-h)}{dh} \frac{dh}{d|U_{10}|} < 0,
\end{equation}
and
\begin{equation}\label{3.21}
\frac{d[u(-h)-u_{0}]}{d|U_{10}|}=\frac{d[u(-h)-u_{0}]}{dh} \frac{dh}{d|u_{10}|} >0.
\end{equation}
This completes the proof of Proposition \ref{pro3.2}.
\end{proof}
\vspace{0.5cm}
\noindent {\bf Acknowledgements.}
The work of Fan is partially supported by a NSF of Henan Province of China Grant No. 222300420478 and the NSF of Henan Normal University Grant No. 2021PL04.

\end{document}